\def\a{\boldsymbol \alpha}
\def\M{\mathcal{M}}
\def\R{\mathcal{R}}
\def\L{\mathcal{L}}
\def\1{\mathbbm{1}}
\def\eps{\varepsilon}
\newtheorem{theo}{Theorem}
\newtheorem{lemma}[theo]{Lemma}
\begin{document}

\title{Bosonic quantum Fourier codes}

\author{Anthony Leverrier}
\affiliation{Inria Paris, France}
\orcid{0000-0002-6707-1458}
\email{anthony.leverrier@inria.fr}
\thanks{This work was funded by the Plan France 2030 through the project ANR-22-PETQ-0006.}

\maketitle

\begin{abstract}
While 2-level systems, aka qubits, are a natural choice to perform a logical quantum computation, the situation is less clear at the physical level. Encoding information in higher-dimensional physical systems can indeed provide a first level of redundancy and error correction that simplifies the overall fault-tolerant architecture. A challenge then is to ensure universal control over the encoded qubits. Here, we explore an approach where information is encoded in an irreducible representation of a finite subgroup of $U(2)$ through an inverse quantum Fourier transform. We illustrate this idea by applying it to the real Pauli group $\langle X, Z\rangle$ in the bosonic setting. The resulting \textit{two-mode Fourier cat code} displays good error correction properties and admits an experimentally-friendly universal gate set that we discuss in detail.  
\end{abstract}


\section{Introduction}
\label{sec:intro}

Years of theoretical advances on quantum error correction and fault tolerance have uncovered a fundamental tension between information processing and error correction in quantum encoding schemes: information well protected against relevant sources of imperfection tends to be difficult to manipulate. The difficulty to combine error correction capabilities and a fault-tolerant manipulation of the logical information is exemplified by the Eastin-Knill theorem: logical gates of a quantum code admitting a transversal implementation cannot form a universal gate set~\cite{EK09}. 
Infinite-dimensional systems such as bosonic modes are a natural playground to try to evade this restriction and form a promising avenue towards reducing the cost of fault tolerance. For instance, the cat code~\cite{MLA14} offers an intrinsic protection against bit flips and one can then turn to classical error correction to fight phase errors~\cite{GM19}. More generally, an appealing architecture for fault-tolerant quantum computing would rely on a concatenation of a bosonic encoding with a quantum LDPC code~\cite{VAW19,CNA22}. Pushing the idea even further, one could even envision a purely bosonic encoding without any concatenation, \textit{e.g.}~with tiger codes~\cite{XWV24}.  
Recent experimental progress towards demonstrating cat~\cite{GFP20,RBG24,PNH25} or GKP~\cite{FNM19,CET20,SER23} encodings is particularly encouraging and suggests to look for potential alternative bosonic codes that would combine a good resistance to loss with a universal gate set.

There exist many possible bosonic encodings of a single qubit: the dual-rail encoding~\cite{TWB23} involves a single photon in two bosonic modes $|1\rangle |0\rangle$ and $|0\rangle |1\rangle$, the cat qubit exploits superpositions of two simple coherent states $|\alpha\rangle \pm |-\alpha\rangle$, the GKP code~\cite{GKP01} relies on more complex grid states and offers better protection at the price of a more complex implementation. Many other candidates exist, ranging from the pair-cat code~\cite{AMG19} to binomial codes~\cite{MSB16}, rotation symmetric codes~\cite{GCB20} and multimode spherical codes~\cite{JIB23}. 
In general, the approach to design a new code is to focus first on protecting against the most relevant sources of imperfections (typically loss in the bosonic setting) and then understanding how to perform logical gates in a fault-tolerant fashion. By contrast, one could equally well focus on encodings where some gates are naturally fault tolerant and only then optimize their error correction capabilities. This strategy was pioneered by Gross for spin encodings~\cite{gro21} and also explored by Kubischta and Teixeira for multiqubit codes~\cite{KT23,KT24} and Denys and Leverrier for bosonic codes~\cite{DL24}. The idea behind this last work is that starting with a finite group $G \subset U(2)$ and a ``nice'' physical representation $\pi$ of that group on the relevant physical space, it is possible to systematically obtain all the \textit{covariant encodings}, \textit{i.e.}~those with the property that applying $\pi(g)$ on the physical state implements the gate $g\in G$ at the logical level. The goal of the present paper is to explore in more detail a specific instance of such a code, where the encoding unitary is the inverse quantum Fourier transform of the group. Interestingly, it turns out that it encodes 2 qubits with very distinct properties. We think of the first one as our logical qubit, for which we want universal control, and of the second one as a gauge or auxiliary qubit useful to implement gates on the first qubit.

We formalize this quantum Fourier encoding in Section \ref{sec:qfc} before focusing in detail on the simplest possible nontrivial instance of the construction,  based on the Pauli group $G = \langle X,Z\rangle$ with the natural passive Gaussian unitary representation $\pi$ on two bosonic modes: the gate $\pi(X)$ simply swaps the two modes and $\pi(Z) = (-1)^{\hat{n}_2}$ applies a $\pi$ phase on the second mode. We present a universal gate set for the logical qubit in Section \ref{sec:univ}. In particular, the implementation of the Hadamard gate crucially relies on a code deformation technique involving the second encoded qubit. We discuss the error correction properties of the encoding in Section \ref{sec:ec}.

\section{The framework and comparison with other bosonic codes}
\label{sec:qfc}

\subsection{Quantum Fourier codes}
Consider a finite subgroup $G \subset U(d)$ and its associated quantum Fourier transform~\cite{CvD10}:
\begin{align}\label{eqn:ft}
F_G := \sum_{g\in G} \sum_{\rho\in \hat{G}} \sqrt{\frac{d_\rho}{|G|}} \sum_{\ell,m=1}^{d_\rho} \langle \ell|\rho(g)|m\rangle \: |\rho, \ell,m\rangle\langle g|,
\end{align}
where $\hat{G}$ is the set of inequivalent irreducible representations of $G$, and the states $|g\rangle$ labeled by group elements are assumed to be orthonormal. Here, $\rho$ is an irreducible representation of the group of dimension $d_\rho$ and the definition of $F_G$ depends on a choice of orthonormal basis $\{ |\rho,\ell,m\rangle  \: : \: 1 \leq \ell,m \leq d_\rho\} \subset \mathbbm{C}^{d_\rho} \otimes \mathbbm{C}^{d_\rho}$ for each irreducible representation of dimension greater than 1. 
If we define the left and right regular representations $\L$ and $\R$ of the group as:
\begin{align}\label{eqn:reg}
\L(g) |h\rangle = |gh\rangle,\qquad \R(g)|h\rangle = |hg^{-1}\rangle,
\end{align}
then the quantum Fourier transform is the unitary that simultaneously diagonalizes both the left and right regular transformations: for all $g \in G$,
\begin{align}\label{eqn:block}
F_G \L(g)F_G^\dag = \bigoplus_{\rho \in \hat{G}} (\rho(g) \otimes \mathbbm{1}_{d_\rho}), \qquad 
F_G \R(g)F_G^\dag = \bigoplus_{\rho \in \hat{G}} (\mathbbm{1}_{d_\rho} \otimes \rho(g)).
\end{align} 
In the following, we will choose the $d$-dimensional defining representation of $U(d)$, denoted by $\lambda$, and will often write $g$ instead of $\lambda(g)$ when there is no ambiguity. We denote by $L$ and $M$ the two $d$-dimensional factors of the space spanned by $|\lambda, \ell,m\rangle$. These two registers will be our logical and multiplicity registers, respectively. 

with uni
When the left regular representation is easy to implement for some system, it makes sense to try to encode information in the logical register $L$, and to use the multiplicity register $M$ as some gauge register.
We therefore define an encoding map $\mathcal{E} :  L \otimes M \to \mathrm{Span}(|g\rangle \: : \: g\in G)$:
\begin{align}\label{eqn:enc}
\mathcal{E} : |\ell\rangle \otimes |m\rangle \quad \mapsto \quad & F_G^\dag |\lambda, \ell, m\rangle=  \sqrt{\frac{d}{|G|}} \sum_{g \in G}  \langle m |\lambda(g)^\dag | \ell\rangle |g\rangle.
\end{align}
We abuse notation and also write $L$ and $M$ for their image by $\mathcal{E}$. 
By construction, this encoding guarantees that the physical gate $\L(g)$ implements a logical operation $\lambda(g)$ on the logical register:
\begin{align}\label{eqn:main-identity}
\L(g) \mathcal{E}(|\ell\rangle|m\rangle) = \L(g)F_G^\dag |\lambda, \ell, m\rangle = F_G^\dag (F_G \L(g)F_G^\dag) |\lambda, \ell, m\rangle
= \mathcal{E} ( \lambda(g) |\ell\rangle \otimes |m\rangle).
\end{align}

It may not be obvious at first sight that there are natural settings where the states $|g\rangle$ are easy to prepare and the operations $\L(g)$ are easy to implement. But this is the case and bosonic codes provide a particularly nice illustration.
For instance, if the group $G$ admits a $d$-dimensional unitary representation $\lambda$, one can consider an arbitrary $d$-mode coherent state $|\a\rangle  = |\alpha_1, \ldots, \alpha_d\rangle$ such that the vectors $\lambda(g)\a$ are all distinct, and the physical representation $\pi$ of $G$ corresponding to passive Gaussian unitaries defined by 
\begin{align}\label{eqn:pi}
\pi(g) |\a\rangle = |\lambda(g) \a\rangle, \quad \forall g \in G.
\end{align}
Here, $\pi(g)$ is a unitary Gaussian operator of the $d$-mode bosonic Hilbert space.
It is well known that the coherent states $|g\a\rangle$ (we omit $\lambda$ for conciseness) are not orthogonal, but they are linearly independent when the group $G$ is finite. One can readily obtain an orthonormal basis labeled by the group elements:
\begin{align}
|g\rangle := \sum_{h \in G} [\Gamma^{-1/2}]_{h,g} |h\a\rangle,
\end{align}
where $\Gamma$ is the Gram matrix of the family $\{|g\a\rangle\}$, namely $[\Gamma]_{g,h} = \langle g\a|h\a\rangle$. To see this, 
\begin{align*}
\langle g|h\rangle &= \sum_{k,\ell} [\Gamma^{-1/2} ]_{gk} [\Gamma^{-1/2}]_{\ell h} \langle k\a|\ell \a\rangle = \sum_{k,\ell} [\Gamma^{-1/2} ]_{gk} [\Gamma^{-1/2}]_{\ell h}\Gamma_{k\ell}= \delta_{g,h}. 
\end{align*}
The representation $\pi$ acts like the left regular representation on these states:
\begin{align}\label{eqn:pi=E}
\pi(g) |h\rangle &= \sum_{k \in G} [\Gamma^{-1/2}]_{k,h}  |gk\a\rangle=  \sum_{k \in G} [\Gamma^{-1/2}]_{gk,gh}  |gk\a\rangle
 = |gh\rangle = \L(g) |h\rangle
\end{align}
where the second equality follows from the fact that the Gram matrix and its powers commute with $\pi(g)$.

We arrive at the definition of the quantum Fourier encoding $\mathcal{E}(|\ell\rangle|m\rangle)$ of $|\ell\rangle|m\rangle \in L\otimes M$, that we also denote by $|\widehat{\ell,m}\rangle$:
\begin{align}
|\widehat{\ell,m}\rangle := \sum_{g \in G} [\Gamma^{-1/2} F_G^\dag]_{g,\lambda \ell m}  |g\a\rangle.
\end{align}
By \eqref{eqn:enc} and \eqref{eqn:pi=E}, we know that $\pi$ acts similarly to the left regular representation for this state, and \eqref{eqn:main-identity} implies that for any group element, $\pi(g)$ implements the logical gate $\lambda(g)$ on the logical register $L$.
It is tempting to consider a large subgroup of $U(d)$ because the corresponding logical gates are easy to implement by construction. However, this leads to more complicated code states that might be delicate to prepare.

Note that the quantum Fourier code is very similar to the covariant encoding of \cite{DL24} which applies the inverse quantum Fourier transform directly to the coherent states $|g\a\rangle$ instead of the orthonormal states $|g\rangle$, and picks an arbitrary state $|\Omega \rangle \in M$. 
Up to a global normalization, the encoded qudit in \cite{DL24} is given by
\begin{align}\label{eqn:dl24}
|\overline{\ell}\rangle \propto \sum_{g\in G} [F^\dag]_{g,\lambda \ell \Omega} |g\a\rangle.
\end{align}
A feature of the Fourier code that we consider here is that it also stores information in the multiplicity space. This extra degree of freedom will be useful to design a universal set of operations for the logical qudit.

\subsection{The two-mode Fourier cat code}

The relevant groups for applying the Fourier encodings are the noncommutative ones.
This is because all the irreducible representations of an abelian group are one-dimensional, preventing one from encoding a nontrivial logical system in a single irreducible representation. For instance, rotation symmetric bosonic codes such as the cat qudit encoding encode the logical basis codewords in different irreducible representations of the group $\mathbbm{Z}_N$, as detailed in Appendix \ref{sec:cat}. 
By contrast, we focus here on a noncommutative group $G$ and the simplest example may be the real Pauli group $G = \langle X,Z\rangle = \{ \pm \1, \pm X, \pm Z, \pm XZ\} \subset U(2)$. It admits 5 inequivalent irreducible representations: four of them are one-dimensional and the standard defining representation $\lambda$ is 2-dimensional: $\lambda(X) = \left[\begin{smallmatrix} 0 & 1 \\ 1 & 0 \end{smallmatrix}\right], \lambda(Z) = \left[\begin{smallmatrix} 1 & 0 \\ 0 & -1 \end{smallmatrix}\right]$.

One could pick any 2-mode coherent state to instantiate the construction. It turns out, however, that not all coherent states are born equal and we will choose $|\a\rangle = |\alpha, i\alpha\rangle$ with $\alpha = \sqrt{\pi/2}$ in the following. 
This ensures that the even cat states $|0_\alpha\rangle \propto |\alpha\rangle + |-\alpha\rangle$ and $|0_{i\alpha}\rangle \propto |i\alpha\rangle + |-i\alpha\rangle$ and odd cat states $|1_{\alpha}\rangle \propto |\alpha\rangle - |-\alpha\rangle$ and $|1_{i\alpha}\rangle \propto |i \alpha\rangle - |-i\alpha\rangle$ are orthogonal. Moreover, for this value, the Gram matrix $\Gamma$ of the family $\{|g\a\rangle \: : \: g\in G\}$ is diagonalized by the Fourier transform of the group, and the restriction of $F \Gamma F^\dag$ to the irreducible representation $\lambda$ is a scalar. 
As a consequence, the four codestates $|\widehat{\ell,m}\rangle$ are products of single-mode cat states:
\begin{align}\label{eqn:basis}
|\widehat{0,0}\rangle = |1_\alpha\rangle |0_{i\alpha}\rangle, \quad |\widehat{0,1}\rangle = |1_{i\alpha}\rangle |0_{\alpha}\rangle, \quad |\widehat{1,0}\rangle =  |0_{i\alpha}\rangle|1_\alpha\rangle, \quad |\widehat{1,1}\rangle =  |0_{\alpha}\rangle|1_{i\alpha}\rangle.
\end{align}
In particular, when the state of the multiplicity qubit is $|0\rangle$, the logical code states $|\widehat{\ell,0}\rangle$ coincide with those of the covariant encoding of \cite{DL24} given in \eqref{eqn:dl24}.
By construction, the SWAP operator $\pi(X)$ that exchanges both modes performs a logical $X_{L}$ on the first qubit, while the phase-flip $(-1)^{\hat{n}_2} = \pi(Z)$ performs a logical $Z_{L}$ since it leaves even cat states invariant and adds a phase $-1$ to odd cat states. 
As already hinted at, even if this formally encodes 2 qubits, the idea is to promote the first one to be the true logical qubit while the second qubit (corresponding to the multiplicity space $M \cong \mathbbm{C}^2)$ will serve as an extra-register useful to implement specific gates.

From the expression of the logical codestates in \eqref{eqn:basis}, one can infer that the codespace manifold lies in the kernel of the Lindblad operators $\hat{a}_1^4 - \alpha^4$ and $\hat{a}_1^2 \hat{a}_2^2 + \alpha^4$, or alternatively by $\hat{a}_1^4 - \alpha^4$ and $\hat{a}_1^2 +\hat{a}_2^2$, and is stabilized by $(-1)^{\hat{n}_1 + \hat{n}_2 +1}$ enforcing that the total photon number parity is odd. These operators are similar to those of various alternative cat encodings.  They are summarized in Table \ref{table:stab}.

\begin{table}[htbp]
\centering
\begin{tcolorbox}[colframe=blue!70!black, title=\textbf{Stabilizers, Lindblad operators, logical states of cat-type bosonic codes}, width=1.0\textwidth]
\centering  
\footnotesize  
\setlength{\tabcolsep}{4pt}  
\renewcommand{\arraystretch}{1.2} 
\begin{tabular}{>{\centering\arraybackslash}p{1.7cm} | >{\centering\arraybackslash}p{1.8cm} >{\centering\arraybackslash}p{2.2cm} >{\centering\arraybackslash}p{2.6cm} >{\centering\arraybackslash}p{4.5cm}}
\toprule
& \textbf{4-legged cat} & \textbf{2-repetition cat} & \textbf{pair-cat} \cite{AMG19} & \textbf{2-mode Fourier} \\
\midrule
\textbf{Stabilizers} 
& $\hat{n} \mod 2$
& $\hat{n}_1 - \hat{n}_2 \mod 2$
& $\hat{n}_1 - \hat{n}_2$
& $\hat{n}_1 - \hat{n}_2 \mod 2$ \\
\midrule
\textbf{Lindblad operators}
& $\hat{a}^4 - \alpha^4$
& \begin{tabular}{c} $\hat{a}_1^2 - \alpha^2$ \\ $\hat{a}_2^2 - \alpha^2$ \end{tabular}
& $\hat{a}_1^2 \hat{a}_2^2 - \alpha^4$
& \begin{tabular}{c} $\hat{a}_1^4 - \alpha^4$ \\ $\hat{a}_1^2 \hat{a}_2^2 + \alpha^4$ \end{tabular} \\
\midrule
\textbf{$|0\rangle_L$} 
& $|0_\alpha\rangle$
& $|0_\alpha\rangle\,|0_\alpha\rangle$
& $\displaystyle \int_0^\pi |0_{\alpha e^{i\theta}}\rangle\,|0_{\alpha e^{-i\theta}}\rangle\,d\theta$ 
& \begin{tabular}{c} $|1_\alpha\rangle\,|0_{i\alpha}\rangle,$  $ |1_{i\alpha}\rangle\,|0_\alpha\rangle$ \end{tabular} \\
\textbf{$|1\rangle_L$} 
& $|0_{i\alpha}\rangle$
& $|1_\alpha\rangle\,|1_\alpha\rangle$
& $\displaystyle \int_0^\pi |1_{\alpha e^{i\theta}}\rangle\,|1_{\alpha e^{-i\theta}}\rangle\,d\theta$ 
& \begin{tabular}{c} $|0_{i\alpha}\rangle\,|1_\alpha\rangle,$  $ |0_\alpha\rangle\,|1_{i\alpha}\rangle$ \end{tabular} \\
\bottomrule
\end{tabular}
\end{tcolorbox}
\caption{For the 4-legged cat qubit and the two-mode Fourier code, the expressions of the logical states hold for the specific case with $\alpha= \sqrt{\frac{\pi}{2}}$. Note that the two logical states of the 4-legged cat are also first order approximations of the logical GKP states (keeping only 2 states in the coherent state basis expansion). The Lindblad operators of the two-mode Fourier code combine those of the 4-legged cat qubit and the pair-cat.}
\label{table:stab}
\end{table}

Before analyzing in more detail the two-mode Fourier cat code, it is insightful to compare it with other bosonic codes. 
Among single-mode encodings, \textit{rotation-symmetric bosonic codes}~\cite{GCB20} (including the standard cat qubit) are closely related and encode information in several irreducible representations of the abelian group $\mathbbm{Z}_N$. They share similar implementations with the two-mode Fourier code for the logical gates $S$ and $CZ$, relying respectively on self-Kerr and cross-Kerr interaction. Moreover, the Fourier code can correct a single-photon loss, exactly as  the 4-legged cat qubit~\cite{OPH16}. A major distinction is that the cat qubit is heavily biased, which is not the case of the Fourier code. 
At the other end of the spectrum, the \textit{GKP code}~\cite{GKP01,GP21} offers optimal performance against loss, and has the remarkable feature that all the logical Clifford gates can be implemented with Gaussian unitaries. The price to pay is the complexity of the state preparation and stabilization~\cite{SSL25}. 

It is also possible to encode a qubit in two bosonic modes. The \textit{dual-rail} encoding does that with a single-photon. In that case, the representation $\pi$ of the group $U(2)$ on the 2-mode Fock space restricts to the defining representation $\lambda$ on the space spanned by single-photon states. All logical single-qubit gates can therefore be implemented with passive Gaussian unitaries. A drawback of this approach is that it cannot correct a single-photon loss, only detect it. The two-mode binomial code generalizes the dual-rail encoding to improve loss tolerance, with more complicated code states~\cite{CLY97}.
Recently, the \textit{pair-cat code} was also introduced~\cite{AMG19} and has the advantage that measuring the loss error syndrome can be done without turning off the dissipation (which shares a Lindblad operator with the Fourier code: see Table \ref{table:stab}). When there's no gain in the channel, the pair-cat code can also correct a single-photon loss on an arbitrary mode. 
An alternative two-mode code is to concatenate a cat qubit with a repetition code of length 2. This \textit{2-repetition cat} encoding can detect a single-photon loss. 
Finally, the two-mode Fourier encoding is closely related to the covariant code of \cite{DL24} which only encodes a single qubit. As detailed in Section \ref{sec:univ}, the 2-qubit code space is instrumental to the design of the logical Hadamard gate for the two-mode Fourier code.

\section{A universal gate set}
\label{sec:univ}

By construction, the gates from the group $G$ are easy to obtain through the physical representation $\pi$. When $G = \langle X,Z\rangle$, this only gives Pauli operators, however, which are insufficient for universal quantum computing. One also needs the ability to prepare some states, perform measurements, as well as Clifford and non-Clifford gates. 
For our specific choice of initial state $|\alpha\rangle |i\alpha\rangle$, the computational basis states are products of two single-mode cat states, which are now routinely prepared in the lab. Measuring the logical state in the $Z$ basis can be done by performing a photon parity measurement since $|0\rangle_L$ has an odd number of photons in the first mode and even in the second mode, while the opposite holds for the state $|1\rangle_L$. Measuring the operator $Y_M$ is also possible by measuring photon numbers modulo 4. Kerr interactions yield some Clifford gates, namely the phase gate $S$ and the $CZ$ gate, similarly to what is done for rotation-symmetric codes. 
The implementation of the Hadamard gate is more original. Applying $\pi(H)$ is a code deformation: the gate $H_L H_M$ is applied to both qubits, but the code is deformed to an equivalent code (a two-mode Fourier cat code with initial state $|e^{i\pi/4} \alpha\rangle |e^{-i\pi/4}\alpha\rangle$ instead of $|\alpha\rangle |i\alpha\rangle$). One can alternate between this gate and a phase gate to obtain a Hadamard gate $H_L$ applied only to the logical qubit. 
Finally, the gate $e^{i \theta Z_L Z_M}$ is obtained thanks to the quantum Zeno effect~\cite{MLA14} with a quadratic Hamiltonian drive $\hat{a}^2 + \hat{a}^{\dag 2}$. Provided that the state of the multiplicity qubit is fixed to $|0\rangle$ or $|1\rangle$, this yields a gate $e^{i \theta Z_L}$, thereby completing a universal gate set for the logical qubit.
These possible implementations are detailed in the remainder of this section, and Table \ref{table:op} provides a summary.

\begin{table}[htbp]
\centering
\begin{tcolorbox}[colframe=blue!70!black, title=\textbf{Physical implementation of logical operations}, width=1.0\textwidth]
\centering  
\footnotesize  
\setlength{\tabcolsep}{4pt}  
\renewcommand{\arraystretch}{1.2}  
\begin{tabular}{>{\arraybackslash}p{4.8cm} | >{\arraybackslash}p{8.7cm}}
\toprule
\textbf{logical operations} & \textbf{possible physical implementations} \\
\midrule
state preparation: $\mathcal{P}_{|\ell\rangle |m\rangle} $
& prepare two single-mode cat states\\
\midrule
logical measurement: $\M_{Z_L} $
& photon parity measurement in either mode\\
2-qubit measurement: $\M_{Z_L Y_M} $
& photon measurement modulo 4 in both modes\\
\midrule
logical Pauli gates: $X_L$, $Z_L$
& SWAP, $ (-1)^{\hat{n}_2}$\\
$X_M Z_M$ &
$-i^{\hat{n}_1 + \hat{n}_2}$\\
\midrule
phase gate $S_L$  &
$(i)$ self-Kerr $i^{\hat{n}_2^2}$\\
& $(ii)$ SNAP gate\\
\midrule
$H_L H_M$
& balanced beamsplitter $\pi(H)$ through code deformation \\
$H_L = S_L (H_L H_M) S_L (H_L H_M) S_L$ 
& 
alternate self-Kerr $i^{\hat{n}_2^2}$ and beamsplitter $\pi(H)$\\
\midrule
entangling gate $CZ_{L_1 L_2}$ &
cross-Kerr interaction $(-1)^{\hat{n}_2 \hat{n}_4}$\\
\midrule
$e^{i \theta Z_L Z_M}$ &
quantum Zeno effect with drive $\hat{a}_1^{2} + \hat{a}_1^{\dag 2}$\\
\midrule
$T$-state preparation $\mathcal{P}_{|T\rangle|0\rangle}$ & prepare $|+\rangle$ with Hadamard, followed by quantum Zeno effect\\
\midrule
$T_L$& quantum Zeno effect, with $|\psi\rangle_{M} \in \{|0\rangle, |1\rangle\}$\\
& SNAP gate\\
\bottomrule
\end{tabular}
\end{tcolorbox}
\caption{Possible implementation of a universal set of logical operations for the logical qubit of the two-mode Fourier cat code.}
\label{table:op}
\end{table}

\subsection{Gates in $N(G)$ through code deformation}
\label{sec:N(G)}

An original feature of the encoding occurs when $G \subsetneq N(G)$, \textit{i.e.}~when the normalizer of $G$ (ignoring global phases) is strictly larger than $G$. 
This is the case for instance with the group $G = \langle X, Z\rangle$ since $N(G) = \langle X,Z,H\rangle$ contains the Hadamard gate. 
We recall that the defining representation $\lambda$ and the physical representation $\pi$ extend to the whole unitary group $U(2)$. The idea to implement the logical gate $\lambda(U)$ (that we denote by $U$ for simplicity) is to perform the physical operation $\pi(U)$. In general, this does not leave the codespace invariant. Rather, we show that it induces a logical gate on both encoded qubits and deforms the Fourier code: the final state lives in the Fourier code with encoding map $\mathcal{E}_U$ obtained by replacing the initial coherent state $|\a\rangle$ with $|U\a\rangle$. This is formalized in the following lemma.
\begin{lemma}
For any gate $U \in N(G)$,
\begin{align}
\pi(U) \  \mathcal{E}( |\ell\rangle  |m\rangle) =\mathcal{E}_{U} (U |\ell\rangle \otimes U|m\rangle).
\end{align}
\end{lemma}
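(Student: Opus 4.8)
The key insight is that $\pi(U)$ acts on coherent states by $\pi(U)|g\a\rangle = |U g\a\rangle = |(UgU^{-1})U\a\rangle$, which connects the two codes.

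Let me think about the proof structure.The plan is to track how $\pi(U)$ transforms the defining ingredients of the encoding, exploiting the fact that $U \in N(G)$ normalizes $G$ so that conjugation $g \mapsto UgU^{-1}$ is an automorphism of $G$. The starting point is the action of $\pi(U)$ on the elementary coherent states: since $\pi$ is a representation of all of $U(2)$, we have $\pi(U)|g\a\rangle = |\lambda(U)\lambda(g)\a\rangle = |\lambda(UgU^{-1})\,\lambda(U)\a\rangle$. The crucial observation is that $\lambda(U)\a = U\a$ is precisely the new reference coherent state defining $\mathcal{E}_U$, and $UgU^{-1} \in G$ because $U$ normalizes $G$. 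So $\pi(U)$ maps the coherent-state family $\{|g\a\rangle\}$ for the original code bijectively onto the family $\{|g(U\a)\rangle\}$ for the deformed code, merely permuting the group labels by the automorphism $\phi_U : g \mapsto UgU^{-1}$.

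First I would expand $\pi(U)\,\mathcal{E}(|\ell\rangle|m\rangle)$ using the definition $\mathcal{E}(|\ell\rangle|m\rangle) = \sum_g [\Gamma^{-1/2}F_G^\dag]_{g,\lambda\ell m}|g\a\rangle$ and push $\pi(U)$ through the sum via the coherent-state identity above, producing $\sum_g [\Gamma^{-1/2}F_G^\dag]_{g,\lambda\ell m}|\phi_U(g)(U\a)\rangle$. Reindexing the sum by $g' = \phi_U(g)$, the task reduces to showing that the resulting coefficient vector equals $[\Gamma_U^{-1/2}F_G^\dag]_{g',\lambda(U\ell)(Um)}$, where $\Gamma_U$ is the Gram matrix of the deformed family. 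The main technical step is to verify that conjugation by $U$ intertwines the regular representation and the Fourier data: concretely, the permutation matrix $P_U$ implementing $g\mapsto\phi_U(g)$ on the group-labeled basis satisfies $P_U \L(g) P_U^\dag = \L(\phi_U(g))$, and correspondingly $F_G P_U F_G^\dag$ acts within each irreducible block $\rho$ as $\rho \circ \phi_U$. For the defining representation $\lambda$, the automorphism $\phi_U$ is realized by conjugation, so $F_G P_U F_G^\dag$ restricted to the $\lambda$-block is (up to the multiplicity factor) the map $|\lambda,\ell,m\rangle \mapsto |\lambda, (U\cdot)\ell, (U\cdot)m\rangle$, which is exactly $U\otimes U$ on $L\otimes M$.

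The hard part will be handling the Gram matrix carefully. I need that $\Gamma_U$, the Gram matrix of $\{|g(U\a)\rangle\}$, relates to the original $\Gamma$ by $\Gamma_U = P_U \Gamma P_U^\dag$, which follows because $\langle g(U\a)|h(U\a)\rangle = \langle \phi_U^{-1}(g)\a | \phi_U^{-1}(h)\a\rangle$ after using unitarity of $\pi(U)$ and the automorphism property; this uses that $\phi_U$ is a bijection of $G$ so inner products are merely permuted. Consequently $\Gamma_U^{-1/2} = P_U \Gamma^{-1/2} P_U^\dag$, and the powers of the Gram matrix again commute with the regular action as established earlier in the excerpt. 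Assembling these intertwining relations, the coefficient vector transforms exactly as claimed, and the permutation $P_U$ cancels between the reindexing and the Gram factor, leaving $U\otimes U$ acting on the logical and multiplicity registers. I would conclude by noting that the restriction to the single irreducible representation $\lambda$ is what makes $U\otimes U$ rather than the full block-diagonal Fourier action appear, since both encoded qubits live entirely within the $\lambda$-block.
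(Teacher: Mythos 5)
Your proposal is correct and follows essentially the same route as the paper's proof: push $\pi(U)$ onto the coherent states via $\pi(U)|g\a\rangle = |(UgU^{-1})U\a\rangle$, reindex the sum by the conjugation automorphism, and use $\Gamma_U = P_U\Gamma P_U^\dag \Rightarrow \Gamma_U^{-1/2} = P_U\Gamma^{-1/2}P_U^\dag$ to identify the result with $\mathcal{E}_U(U|\ell\rangle\otimes U|m\rangle)$. The only cosmetic difference is that you phrase the final identification representation-theoretically (conjugating $F_G$ by $P_U$ and restricting to the $\lambda$-block), where the paper simply rewrites the matrix element $\langle m|U^\dag g^\dag U|\ell\rangle$ directly.
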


\begin{proof}
Consider $U \in N(G)$ and apply $\pi(U)$ to an encoded state
\begin{align}
\pi(U)|\widehat{\ell,m}\rangle &= \sum_{g,h \in G} [\Gamma^{-1/2}]_{g,h}  \langle m |g^\dag |\ell\rangle \pi(U) |h\a\rangle \nonumber\\
&= \sum_{g,h \in G} [\Gamma^{-1/2}]_{g,h}  \langle m | g^\dag |\ell\rangle \pi(UhU^\dag) |U\a\rangle \nonumber\\
&= \sum_{g',h' \in G} [\Gamma^{-1/2}]_{U^\dag g'U,U^\dag h'U}  \langle m |U^\dag {g'}^\dag U |\ell\rangle \pi(h') |U\a\rangle \label{eqn:cov}\\
&= \sum_{g',h' \in G} [\Gamma_U^{-1/2}]_{g',h'}  \langle m | U^\dag{g'}^\dag U |\ell\rangle \pi(h') |U\a\rangle \label{eqn:inv}\\
&= \mathcal{E}_{U} (U|\ell\rangle \otimes U|m\rangle) \nonumber
\end{align}
where we define $[\Gamma_U]_{g,h} = \langle U^\dag g U \a |U^\dag h U \a\rangle= \langle g U \a | h U \a\rangle$ to be the Gram matrix associated with the family $\{ |g U \a\rangle \: : \: g\in G\}$ and $\mathcal{E}_{U}$ is the quantum Fourier encoding using the constellation induced by $|U\a\rangle$ rather than $|\a\rangle$.
The change of variables in \eqref{eqn:cov} is valid since $UgU^\dag, U h U^\dag \in G$ for $U \in N(G)$.
To prove \eqref{eqn:inv}, observe that
\begin{align*}
[\Gamma]_{U^\dag g U, U^\dag h U} &= \langle U^\dag g U \a | U^\dag h U \a\rangle= \langle   g U \a |   h U \a\rangle=  [\Gamma_U]_{g,h}.
\end{align*}
The matrices $\Gamma$ and $\Gamma_U$ are thus related through $\Gamma_U = P \Gamma P^\dag$ for the permutation matrix $P$ with $P_{g,g'}  = \delta_{g', U^\dag g U}$.
Since $\Gamma$ is positive semidefinite and $P$ is a unitary, applying the inverse square root gives $\Gamma_U^{-1/2} = P \Gamma^{-1/2} P^\dag$ and therefore $[\Gamma^{-1/2}]_{U^\dag g U, U^\dag h U} = [\Gamma_U^{-1/2}]_{g,h}$, as needed.
\end{proof}

In the case of the group $\langle X,Z\rangle$, the application of the gate $\pi(H)$, a balanced beamsplitter operation, implements the logical operation $H_L H_M$ and maps the code from $\mathrm{Im} (\mathcal{E})$ to $\mathrm{Im}  (\mathcal{E}_U)$. It is also very useful to get a gate that only applies a Hadamard gate on the logical qubit while leaving the multiplicity qubit alone. 
This is obtained thanks to the gate identity
\begin{align*}
S H S H S = e^{i\pi/4} H
\end{align*}
where $S = \mathrm{diag}(1,i)$ is the phase gate. 
Assuming that the logical gate $S_L$ is available (see Section \ref{sec:S}), a logical $H$ gate that keeps the state in the original encoding $\mathcal{E}$ is therefore obtained through
\begin{align}
H_L = e^{-i\pi/4} S_L (H_L H_M) S_L (H_L H_M) S_L. 
\end{align}
This is because the two factors $H_M$ cancel each other, and because applying the code deformation twice does nothing since $\pi(H) \pi(H) |\a\rangle = |\a\rangle$.

 In general, the new code induced by $\mathcal{E}_U$ may not have the same error correction properties as the initial code, and could perform significantly worse, which would be bad for the overall scheme. For our specific choice of initial state of the form $|\alpha, i \alpha\rangle$, and for $U=H$, it turns out that it does, as discussed in Section \ref{sec:ec}.

\subsection{The logical $S$ and $CZ$ gates}
\label{sec:S}

The most direct way to obtain the Clifford gates $S$ and $CZ$ is similar to the case of rotation-symmetric bosonic codes and relies on the Kerr effect.

\begin{lemma}
The $S$ and $CZ$ gates can be implemented with self-Kerr and cross-Kerr interactions:
\begin{align}
S_L = i^{\hat{n}_2^2}, \qquad CZ_{L_1 L_2} = (-1)^{\hat{n}_2 \hat{n}_4}.
\end{align}
\end{lemma}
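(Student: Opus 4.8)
The plan is to verify each claimed identity by computing the action of the stated physical operator on the logical codestates and showing it matches the desired logical gate up to the irrelevant multiplicity-register bookkeeping. First I would recall from \eqref{eqn:basis} that the four codestates are tensor products of single-mode cat states, and that the self-Kerr and cross-Kerr operators act diagonally in the photon-number basis. The key structural fact I would exploit is that the even cat states $|0_\beta\rangle \propto |\beta\rangle + |-\beta\rangle$ are supported only on Fock states $|n\rangle$ with $n \equiv 0 \pmod 2$, while the odd cat states $|1_\beta\rangle$ are supported on $n \equiv 1 \pmod 2$. This parity structure is what makes the operators $i^{\hat{n}_2^2}$ and $(-1)^{\hat{n}_2 \hat{n}_4}$ collapse to clean phases on the codespace.

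For the $S_L = i^{\hat{n}_2^2}$ claim, I would note that on a Fock state $|n\rangle$ the operator acts as $i^{n^2}$, and since $n^2 \equiv n \pmod 2$ we have $i^{n^2} = 1$ for even $n$ and $i^{n^2} = i$ for odd $n$. Applying this to the second mode of each codestate in \eqref{eqn:basis} and reading off whether the second-mode cat state is even or odd, I would check that $|\widehat{0,0}\rangle$ and $|\widehat{0,1}\rangle$ (where $Z_L$ reads $+1$) pick up one phase while $|\widehat{1,0}\rangle$ and $|\widehat{1,1}\rangle$ pick up the other, reproducing a diagonal gate $\mathrm{diag}(1,i)$ on the logical register up to a global phase; I would confirm it acts trivially (or as an irrelevant global phase) on the multiplicity label. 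The same parity argument handles $CZ_{L_1 L_2} = (-1)^{\hat{n}_2 \hat{n}_4}$: the operator contributes $-1$ exactly when both $\hat{n}_2$ and $\hat{n}_4$ are odd, i.e.\ when both logical qubits are in the $Z_L = -1$ branch, which is precisely the defining sign pattern of $CZ$.

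The main obstacle I anticipate is not the parity bookkeeping itself but keeping careful track of which mode carries the logical information and which carries the multiplicity label, since the two get swapped between the codestates (compare $|\widehat{0,0}\rangle = |1_\alpha\rangle|0_{i\alpha}\rangle$ with $|\widehat{1,0}\rangle = |0_{i\alpha}\rangle|1_\alpha\rangle$). I would therefore organize the computation around the observation stated earlier in the excerpt that the logical $Z_L$ is implemented by the total photon parity, so that the diagonal phase a Kerr operator assigns to each codestate can be matched directly against the eigenvalue of $Z_L$ on that state. The one genuine subtlety is confirming that $i^{\hat{n}_2^2}$ really acts as a \emph{logical} $S$ and not as some $S$-type gate entangled with the multiplicity qubit; here I would rely on checking the four basis vectors explicitly and noting that the induced phases depend only on the logical label $\ell$, after accounting for the value $\alpha = \sqrt{\pi/2}$ that fixes the specific codestates in \eqref{eqn:basis}.
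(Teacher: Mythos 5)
Your proposal is correct and follows essentially the same route as the paper: both arguments reduce to the single fact that the second-mode photon parity equals the logical label $\ell$ (equivalently $(-1)^{\hat{n}_2}=Z_L$ on the codespace), after which the diagonal Kerr phases are read off, giving $i^{\ell}$ for the self-Kerr and $(-1)^{\ell_1\ell_2}$ for the cross-Kerr. One minor precision: $i^{n^2}$ depends on $n^2 \bmod 4$ rather than $\bmod 2$, so the step ``$n^2\equiv n \pmod 2$'' should be upgraded to $n^2\equiv 0$ or $1 \pmod 4$ (the paper uses the identity $i^{n^2}=\exp\bigl(i\tfrac{\pi}{4}(1-(-1)^n)\bigr)$), but your stated conclusion is the correct one.
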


\begin{proof}
These properties directly follow from the fact that $\pi(Z) = (-1)^{\hat{n}_2}$ applies a logical $Z$ gate.
For any integer $n$, it holds that $i^{n^2} = \exp\left( i\frac{\pi}{4}(1 - (-1)^n)\right)$ and therefore
\begin{align*}
i^{\hat{n}_2^2} = \exp\left( i\frac{\pi}{4}(\1 - Z_L)\right) = \exp\left( i\frac{\pi}{4}(2 |1\rangle\langle 1|_L \otimes \1_M)\right) = S_L. 
\end{align*} 
Consider two codeword basis states $|\widehat{\ell_1, m_1}\rangle|\widehat{\ell_2, m_2}\rangle$ and apply the controlled-rotation unitary $(-1)^{\hat{n}_2 \hat{n}_4}$:
\begin{align*}
(-1)^{\hat{n}_2 \hat{n}_4}|\widehat{\ell_1, m_1}\rangle|\widehat{\ell_2, m_2}\rangle &= \left((-1)^{\hat{n}_2}\right)^{\hat{n}_4} |\widehat{\ell_1, m_1}\rangle|\widehat{\ell_2, m_2}\rangle\\
&= \left((-1)^{\ell_1} \right)^{\hat{n}_4} |\widehat{\ell_1, m_1}\rangle|\widehat{\ell_2, m_2}\rangle\\
&= \left((-1)^{\hat{n}_4} \right)^{\ell_1}|\widehat{\ell_1, m_1}\rangle|\widehat{\ell_2, m_2}\rangle\\
&= \left((-1)^{\ell_2}\right)^{\ell_1}|\widehat{\ell_1, m_1}\rangle|\widehat{\ell_2, m_2}\rangle\\
&= (-1)^{\ell_1 \ell_2} |\widehat{\ell_1, m_1}\rangle|\widehat{\ell_2, m_2}\rangle
\end{align*}
which again only relies on the fact that $(-1)^{\hat{n}_2} |\widehat{\ell,m} \rangle = (-1)^\ell  |\widehat{\ell,m} \rangle$.
\end{proof}

\subsection{Logical $Z(\theta)$ gates through quantum Zeno effect}
\label{sec:zeno}

A convenient way to perform gates of the form $e^{i \theta Z}$ on a cat qubit is through the quantum Zeno effect~\cite{MLA14}. The idea is to supplement the dissipation that stabilizes the code space with a slow Hamiltonian dynamics, thus leading to a quasi-unitary evolution within the code subspace. For the standard cat qubit, this is achieved with a Hamiltonian of the form $\eps (\hat{a} + \hat{a}^\dag)$. Here, we require a quadratic Hamiltonian $H_\eps = \eps(\hat{a}_1^2 + \hat{a}_1^{\dag 2})$, similar to the 4-legged cat qubit. 

For the value $\alpha = \sqrt{\frac{\pi}{2}}$, the four code-basis states $|\widehat{\ell,m}\rangle$ are products of single-mode cat states that satisfy
\[ \hat{a}_1^2|\widehat{\ell,m}\rangle = (-1)^{\ell + m} \alpha^2  |\widehat{\ell,m}\rangle = \alpha^2 Z_L Z_M|\widehat{\ell,m}\rangle.\]
If $\Pi = \sum_{\ell,m=0}^1 |\widehat{\ell,m}\rangle\langle \widehat{\ell,m}|$ denotes the projector onto the 2-qubit code space, then 
\[ \Pi (\hat{a}^2 + \hat{a}^{\dagger 2}) \Pi = 2 \alpha^2 Z_L Z_M \Pi\]
and therefore $\Pi \exp(i T H_\eps) \Pi$ effectively implements the logical gate $e^{i \theta Z_L Z_M}$ on the code space if $T =  \frac{\theta}{2\alpha^2 \eps}=\frac{\theta}{\pi \eps}$.
In particular, if the state of the multiplicity qubit is $|0\rangle$ (and similarly if it is $|1\rangle$), this corresponds to the logical gate $Z_L(\theta) := e^{ i\theta Z_L}$. 

The main purpose of this gate is to transform a $|+\rangle$ into a $|+_\theta\rangle = \frac{1}{\sqrt{2}}(|0\rangle + e^{i\theta}|1\rangle)$ state. For this, it is crucial to apply it to a state of the form $|\widehat{+,0}\rangle$ since applying it to $|\widehat{0,0}\rangle$ or $|\widehat{+,+}\rangle$ wouldn't yield a useful state. This explains why the gate $H_L H_M$, obtained by applying the beamsplitter $\pi(H)$ as described in Section \ref{sec:N(G)}, was not quite sufficient for our purpose.
Crucially, the state of the multiplicity register should be well-controlled for the gate to perform as expected. This suggests to mainly apply this gate to auxiliary states in order to prepare logical $T$-states for instance, and exploit gate teleportation to implement a logical $T$ gate.

\subsection{Alternative implementation of $Z(\theta)$ with SNAP gates}

A powerful approach to implementing phase gates of the form $e^{i\theta Z_L}$ relies on selective number-dependent arbitrary phase (SNAP) gates~\cite{HVH15}, \cite{RRM20}.
These gates are of the form 
\begin{align}
\mathcal{S}(\vec{\theta}) = \sum_{n=0}^\infty e^{i \theta_n} |n\rangle\langle n|.
\end{align}
It is straightforward to obtain logical phase and $T$ gates by choosing the appropriate sequence of $\theta_n$ on the second bosonic mode:
\begin{align}
S_L = \sum_{n=0}^\infty e^{i \frac{\pi}{2} n^2} |n\rangle \langle n|_2,\qquad T_L = \sum_{n=0}^\infty e^{i \frac{\pi}{4} n^4} |n\rangle\langle n|_2.
\end{align}

\subsection{State preparation and measurements}

As already mentioned, in the special case where $\a = (\sqrt{\pi/2}, i\sqrt{\pi/2})$, the four basis states take a simple product form as in \eqref{eqn:basis}.
Measuring the logical qubit in the $Z$ basis is straightforward by measuring the parity of either mode. 
In fact, it is also possible to measure the operator $Z_L Y_M$ with photon number measurements. 
The eigenstates of $Z_L Y_M$ can be expanded in the Fock basis:
\begin{align*}
|\widehat{0,+i}\rangle & \propto |1_\alpha\rangle |0_{i\alpha}\rangle + i |1_{i\alpha}\rangle |0_\alpha\rangle  \\
& = \sum_{p,q=0}^\infty ((-1)^q - (-1)^p) f_{p,q} |2p+1\rangle |2q\rangle\\
|\widehat{0,-i}\rangle & \propto \sum_{p,q=0}^\infty ((-1)^q + (-1)^p) f_{p,q} |2p+1\rangle |2q\rangle\\
|\widehat{1,+i}\rangle & \propto  \sum_{p,q=0}^\infty ((-1)^q - (-1)^p)f_{p,q}  |2q\rangle|2p+1\rangle\\
|\widehat{1,-i}\rangle & \propto  \sum_{p,q=0}^\infty ((-1)^q + (-1)^p) f_{p,q} |2q\rangle|2p+1\rangle
\end{align*}
for some Poisson-like coefficients $f_{p,q} = \frac{\alpha^{2p+2q+1}}{\sqrt{(2p+1)!(2q)!}}e^{-\alpha^2}$.
Measuring the photon number modulo 4 for each mode distinguishes between the four states, as summarized in Table \ref{table:ZY}.

\begin{table}[htbp]
\centering
\begin{tcolorbox}[colframe=blue!70!black, title=\textbf{Outcomes of the $Z_LY_M$ measurement}, width=0.9\textwidth]
\centering  
\footnotesize
\setlength{\tabcolsep}{12pt} 
\renewcommand{\arraystretch}{1.3} \begin{tabular}{>{\centering\arraybackslash}p{1.2cm} | >{\centering\arraybackslash}p{1.2cm} >{\centering\arraybackslash}p{1.2cm} >{\centering\arraybackslash}p{1.2cm} >{\centering\arraybackslash}p{1.2cm}}
\toprule
$\hat{n}_1 \setminus \hat{n}_2$ & \textbf{0} & \textbf{1} & \textbf{2} & \textbf{3} \\
\midrule
\textbf{0} &  & $|\widehat{1,-i}\rangle$ &  & $|\widehat{1,+i}\rangle$ \\
\midrule
\textbf{1} & $|\widehat{0,-i}\rangle$ &  & $|\widehat{0,+i}\rangle$ &  \\
\midrule
\textbf{2} &  & $|\widehat{1,+i}\rangle$ &  & $|\widehat{1,-i}\rangle$ \\
\midrule
\textbf{3} & $|\widehat{0,+i}\rangle$ &  & $|\widehat{0,-i}\rangle$ &  \\
\bottomrule
\end{tabular}
\end{tcolorbox}
\caption{Possible outcomes of a photon number measurement modulo 4, for both bosonic modes. Note that the parity (outcome modulo 2) is sufficient to recover $Z_L$. If a single photon is lost, one still correctly recovers the value of $Y_M$.}
\label{table:ZY}
\end{table}

\section{Stabilization and error correction}
\label{sec:ec}

The quantum Fourier encoding is valid for any initial two-mode coherent state but the choice of this state impacts the protection offered by the code. Without loss of generality, the state can be taken of the form $|\alpha\rangle |\alpha e^{i \phi}\rangle$ for some $\alpha>0$ and $\phi \in (0,\pi)$ to be optimized. 
Similarly to quantum spherical codes~\cite{JIB23} and to the covariant encoding results from \cite{DL24}, it makes sense to choose parameters that maximize the minimum Euclidean distance between the coherent states in the constellation. This imposes $\phi = \frac{\pi}{2}$. In addition, the choice $\alpha = \sqrt{\frac{\pi}{2}}$ offers several features.
First, for this value, the  restriction of the encoding to $\mathrm{Span}(|\widehat{0,0}\rangle = |1_\alpha\rangle |0_{i\alpha}\rangle, |\widehat{1,0}\rangle = |0_{i \alpha}\rangle |1_\alpha\rangle)$ satisfies the Knill-Laflamme conditions of the pure-loss channel, the dominant source of errors for bosonic codes, at the first order since the 6 states 
\[  |\widehat{0,0}\rangle, \quad |\widehat{1,0}\rangle, \quad \hat{a}_1  |\widehat{0,0}\rangle, \quad \hat{a}_1 |\widehat{1,0}\rangle, \quad \hat{a}_2  |\widehat{0,0}\rangle, \quad \hat{a}_2|\widehat{1,0}\rangle\]
are orthogonal.\footnote{This holds more generally for any state $|\psi\rangle_M$ of the multiplicity register with real amplitudes: $|\psi\rangle = c |0\rangle + s |1\rangle$. Recalling that $\langle k_{i^m \alpha}|\ell_{i^n \alpha}\rangle = \delta_{k,\ell} \delta_{m,n}$ for $k,\ell, m, n \in \{0,1\}$ and the choice $\alpha =\sqrt{\pi/2}$, it is direct to check that the states
\begin{align*}
|\widehat{0,\psi}\rangle &= c |1_\alpha\rangle |0_{i\alpha}\rangle + s|1_{i\alpha}\rangle |0_\alpha\rangle, \qquad
|\widehat{1,\psi}\rangle = c |0_{i\alpha}\rangle |1_\alpha\rangle+ s |0_\alpha\rangle |1_{i\alpha}\rangle,\\
\hat{a}_1 |\widehat{0,\psi}\rangle &\propto c  |0_\alpha\rangle |0_{i\alpha}\rangle + i s|0_{i\alpha}\rangle |0_\alpha\rangle, \qquad
\hat{a}_1 |\widehat{1,\psi}\rangle \propto i c |1_{i\alpha}\rangle |1_\alpha\rangle+ s |1_\alpha\rangle |1_{i\alpha}\rangle, \\
\hat{a}_2 |\widehat{0,\psi}\rangle &\propto ic  |1_\alpha\rangle |1_{i\alpha}\rangle + s|1_{i\alpha}\rangle |1_\alpha\rangle, \qquad
\hat{a}_2 |\widehat{1,\psi}\rangle \propto  c |0_{i\alpha}\rangle |0_\alpha\rangle+ is |0_\alpha\rangle |0_{i\alpha}\rangle
\end{align*}
are all orthogonal. This implies that the Knill-Laflamme conditions are satisfied for the approximate Kraus operators $\{ \1, \hat{a}_1, \hat{a}_2\}$ of the loss channel in the low-loss regime. The true Kraus operators have an additional factor $\sqrt{1-\gamma}^{\ \hat{n}_{1} + \hat{n}_2}$ that maps the coherent state $|g\a\rangle$ of the constellation to $|\sqrt{1-\gamma} g\a\rangle$, but the orthogonality remains approximately satisfied for a loss parameter $\gamma \ll 1$. } 
This is similar to the 4-legged cat qubit and the pair-cat code that can also correct a single photon loss (in either mode for the pair-cat code).
This optimality is confirmed by computing the performance of the Petz-recovery map of this code for the pure-loss channel: see Fig.~\ref{fig:infidelity}. More precisely, the idea to assess the performance of a code against the pure-loss channel is to compute its fidelity of entanglement. 
Starting with a maximally entangled state $|\Phi\rangle_{AR} = \frac{1}{\sqrt{2}} (|00\rangle_{AR} + |11\rangle_{AR})$ between a register $A$ and a reference $R$, one successively encodes system $A$ in the code via the encoding map $\mathcal{E}: |\ell\rangle \mapsto |\widehat{\ell,0}\rangle$, sends it through the pure-loss channel $\mathcal{N}$ and then performs a recovery map $\mathcal{R}$. The entanglement fidelity is then defined as the fidelity $F_{\mathrm{ent}}$ between the output state $(\mathcal{RNE} \otimes \mathrm{id}) (|\Phi\rangle \langle \Phi|)$ and the initial state $|\Phi\rangle$. 
Often, one considers the optimal recovery map and the corresponding fidelity can be computed by solving a semidefinite program~\cite{FSW07}. Here, one takes instead the slightly suboptimal recovery map called the \textit{Petz map} that still offers good performance~\cite{pet88,GLM22} and is easy to compute~\cite{ZHL24}. Details can be found in Appendix \ref{sec:fid}.
Fig.~\ref{fig:infidelity} compares the performance of the 2-mode Fourier code with the other three encodings mentioned previously: the 4-legged single mode cat code, the 2-repetition cat code and the pair-cat code. All these codes, except for the 2-repetition cat code, admit a sweet spot for the value $\alpha$, and perform similarly for this value. By contrast, the 2-repetition cat code gets worse for larger amplitudes. This is expected since losing a single photon leads to an uncorrectable error and the probability of a single-photon loss increases with $\alpha$. 

Finally, the Gram matrix $\Gamma$ becomes diagonal in the Fourier basis exactly for values of $\alpha$ of the form $\sqrt{\frac{p \pi}{2}}$ for some integer $p$. In that case, it means that the encoding $|\ell\rangle \mapsto |\widehat{\ell, \Omega}\rangle$ for an arbitrary state $|\Omega\rangle$ coincides exactly with the covariant encoding of \cite{DL24}. 

\begin{figure*}[htbp]
    \centering
    \includegraphics[width=0.95\linewidth]{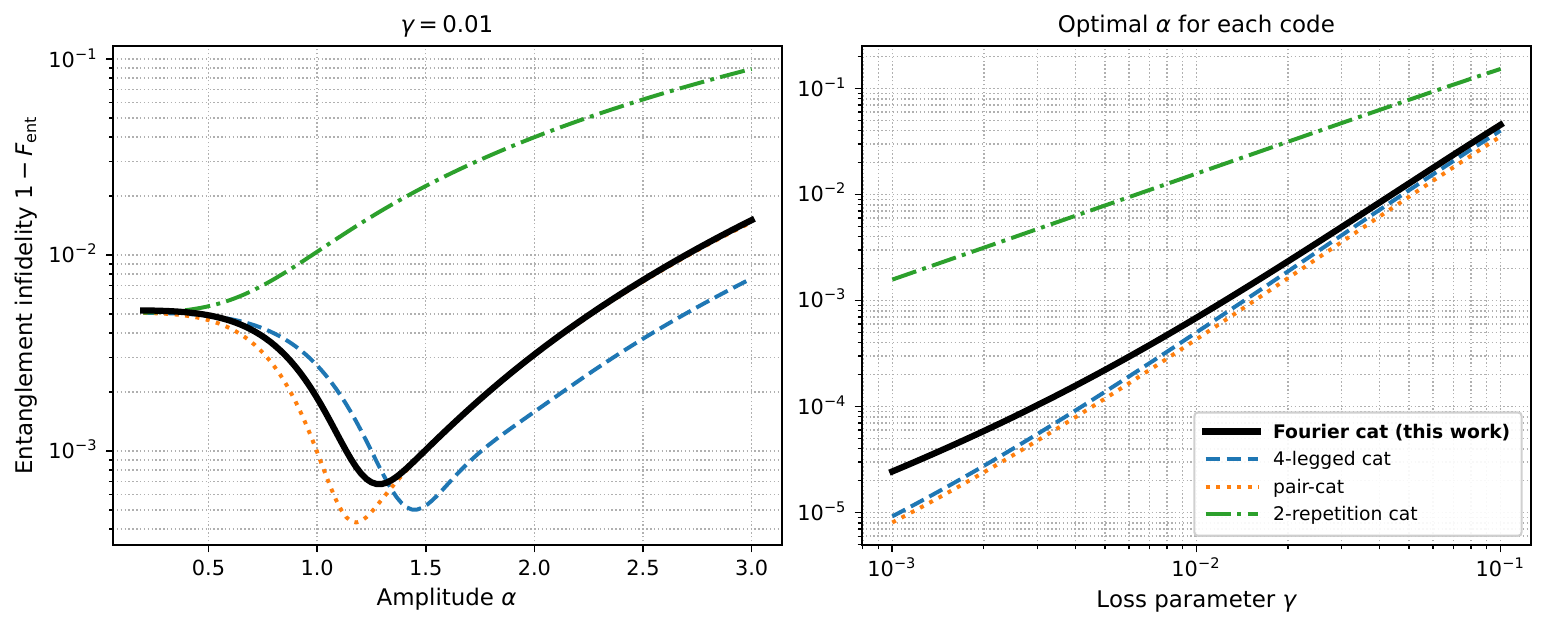}
    \caption{Entanglement infidelity $1-F_{\mathrm{ent}}$ of cat-type bosonic qubit encodings for the pure-loss channel of parameter $\gamma$, using the Petz recovery map. Shown are the two-mode Fourier cat code (logical qubit with the multiplicity register fixed to $|0\rangle_M$), the 4-legged cat code, the 2-repetition cat code, and the pair-cat code. \textbf{Left:} $1-F_{\mathrm{ent}}$ versus coherent amplitude $\alpha$ at fixed $\gamma=0.01$. \textbf{Right:} $1-F_{\mathrm{ent}}$ versus $\gamma$ using $\alpha=\sqrt{\pi/2}$ for the two-mode Fourier cat and 2-repetition cat codes, and $\alpha$ is chosen to minimize the left-panel infidelity at $\gamma=0.01$ for the 4-legged cat and pair-cat codes. Pair-cat values are evaluated in a truncated two-mode Fock basis with cutoffs chosen to ensure numerical convergence. }
    \label{fig:infidelity}
\end{figure*}

Note that the fault-tolerant implementation described in Section \ref{sec:univ} does not visit the full 4-dimensional encoded space. Rather, the idea is to spend most of the computation with a fixed state $|0\rangle_M$ in the multiplicity space, and occasionally $|+\rangle_M$ when one applies the beamsplitter unitary $\pi(H)$ to implement the gate $H_L H_M$ on the deformed code. 
This explains why considering the restriction to $\mathrm{Span}(|\widehat{0,0}\rangle, |\widehat{1,0}\rangle)$ is relevant. Note that when $\pi(H)$ is applied, the code becomes $\mathrm{Span}(|\widehat{0,+}\rangle, |\widehat{1,+}\rangle)$ for an initial coherent state $\pi(H) |\alpha,i\alpha\rangle = |e^{i\pi/4}\alpha, e^{-i\pi/4}\alpha\rangle$ which is isomorphic to the initial code and therefore performs similarly against pure loss.

\subsection{Dissipators and stabilizers}

The full 2-qubit encoding $\mathrm{Span}(|\widehat{\ell,m}\rangle \: : \: \ell, m \in \{0,1\} )$ is a four-dimensional manifold corresponding to the odd-parity sector of the kernel of the following Lindblad operators
\begin{align}\label{eqn:Lindblad}
L_1 = \hat{a}_1^{4} - \alpha^4, \qquad L_2 = \hat{a}_2^{4} - \alpha^4, \qquad L_{12} = \hat{a}_1^2 \hat{a}_2^2 + \alpha^4. 
\end{align}
Note that $L_1$ and $L_2$ are redundant and that one could alternatively only consider the pair of operators $L_1, L_{12}$. Moreover, one could replace $L_1$ or $L_2$ by the operator $L_0 = \hat{a}_1^2 + \hat{a}_2^2$, which has the advantage of being of degree 2 instead of 4. 
The odd parity projector is given by 
\begin{align}
\Pi_{\mathrm{odd}} = \sum_{m + n = 1\mod 2} |m,n\rangle\langle m,n|.
\end{align}
This projector is also relevant for the 2-repetition cat code $\mathrm{Span}(|0_\alpha\rangle|0_\alpha\rangle,|1_\alpha\rangle|1_\alpha\rangle)$ which also imposes a parity constraint on the total number of photons in the 2 bosonic modes. 
The Lindblad operators $L_1$ and $L_2$ are identical to those of the 4-legged cat qubit, and $L_{12}$ coincides with the Lindblad operator of the pair-cat code.
Similarly to the pair code, the parity measurement can be performed without turning off the dissipator $L_{12}$ since it commutes with $e^{i\theta (\hat{n}_1-\hat{n}_2)}$, but this is not the case of $L_1$ or $L_2$ which should be turned off for this parity measurement.

As explained in Section \ref{sec:N(G)}, it is possible to implement a Hadamard gate on both qubits simultaneously by code deformation, simply by applying the beamsplitter gate $\pi(H)$. This sends the initial constellation of coherent states $\{ |g\a\rangle \: : \: g\in \langle X,Z\rangle\}$ to $\{ |gH\a\rangle \: : \: g\in \langle X,Z\rangle\}$. One can check that this constellation lies in the kernel of the Lindblad operators 
\begin{align}
L_1' = \hat{a}_1^{4} + \alpha^4,\qquad L_2' = \hat{a}_2^4+\alpha^4,\qquad \hat{a}_1^2 \hat{a}_2^2 - \alpha^4.
\end{align}

\subsection{Beyond pure loss}

Losses are not the only source of imperfection affecting bosonic systems, and dephasing is another prominent problem. 
One can get a rough idea of the resistance of a bosonic code to dephasing by computing the minimum Euclidean distance between two coherent states in the constellation, and indeed the choice of coherent state $|\alpha, i\alpha\rangle$ is optimal in this respect among states of the form $|\alpha, e^{i\phi} \alpha\rangle$. This only provides some partial information, however, since the superposition is not uniform over the coherent states as in the case of spherical codes~\cite{DL23, JIB23}. 

Evaluating the performance against both losses and dephasing is computationally challenging, especially for a two-mode encoding. 
Nevertheless, it is possible to numerically optimize single-mode bosonic codes that will perform well against a combination of loss and dephasing~\cite{LXJ22}. When the strength of both effects is of the same magnitude, it turns out that the optimal single-mode code is very close  to $\mathrm{Span} (|0_{i\alpha}\rangle, |1_{\alpha}\rangle)$, as shown on Fig. 2 in \cite{LXJ22}. This suggests that the two-mode encoding $|\widehat{0,0}\rangle = |1_{\alpha}\rangle | 0_{i\alpha}\rangle, |\widehat{1,0}\rangle = | 0_{i\alpha}\rangle|1_{\alpha} \rangle$ should be fairly robust against the loss-dephasing channel.

\section{Discussion and future work}
\label{sec:future}

We have focused here on a specific instance of a bosonic quantum Fourier code associated with the Pauli group $D_8 = \langle X,Z\rangle$. Other groups could also be relevant, such as the quaternion group $Q_8 = \langle iX, iZ\rangle$, which has the property that its normalizer $N(Q_8)$ is equal to the binary octahedral group $2O$, \textit{i.e.}~the single-qubit Clifford group. One could therefore implement a gate $S_L S_M$ with a passive Gaussian unitary instead of a self-Kerr evolution. On the other hand, the basis states do not factorize as products of single-mode cat states. It makes sense to investigate other groups, potentially also on qudits of dimension $d$, with encodings over $d$ bosonic modes.

The Gaussian unitary representation $\pi$ is particularly convenient for optical setups, but may be more challenging for circuit-QED architectures. Nevertheless, beamsplitters with excellent fidelities have recently been demonstrated giving hope that the approach outlined here could become realistic in the coming years~\cite{LMG23,CGX23}. The stabilization of the code space with dissipation requires Lindblad operators of degree 4, similarly to the 4-legged cat qubit and the pair-cat codes, which could prove challenging. 

While we have discussed some basic properties of the error correcting code, we have not explicitly assessed how much the suggested implementation of the logical gates would be impacted by experimental imperfections. It is also crucial to understand how errors would propagate in the circuit. While not a full proof, the fact that many physical gates described in Table \ref{table:op} have also been studied for the main bosonic codes out there is a source of optimism concerning their potential fault tolerance. 
More importantly, this table only provides a list of possible physical implementations for the logical operations, but better alternatives likely exist.

Can the Fourier encoding be extended beyond bosonic codes? It is tempting to consider multiqubit codes with the tensor representation $\pi(g) = g^{\otimes n}$. This approach doesn't work directly, however, because the states $\pi(g) |\Phi\rangle$ will not be linearly independent, for any choice of initial state $|\Phi\rangle$, as soon as the group $G$ contains nontrivial phases such as $-1$. Spin encodings and rotors are natural candidates to explore beyond bosonic codes.

\acknowledgements{I thank Aur\'elie Denys, Mazyar Mirrahimi, Cl\'ement Poirson and Christophe Vuillot for many useful discussions about bosonic codes, and Lev-Arcady Sellem for comments on an earlier version of the manuscript.}



\appendix


\section{Single-mode cat qudit}
\label{sec:cat}

One recovers single-mode $d$-dimensional cat encodings by considering the abelian group $G = \mathbbm{Z}/N \mathbbm{Z}$ for $N = dM$, a multiple of $d$, together with its representations $\lambda_k(1) = \omega^k$ and infinite-dimensional representation $\pi(1) = \omega^{\hat{n}}$ for $\omega = e^{2\pi i/N}$.
The Fourier transform over $G$ is 
\begin{align}
F = \frac{1}{\sqrt{N}} \sum_{k,\ell=0}^{N-1} \omega^{k\ell} |\lambda_k\rangle \langle \ell |.
\end{align} 
Fix $\alpha>0$. The Gram matrix $\Gamma$ associated with the family $\pi(k)|\alpha\rangle$ is $\Gamma_{k,\ell} = e^{\alpha^2(-1+\omega^{\ell-k})}$
and is diagonalized by the Fourier transform $\Gamma = F \mathrm{diag}(\Delta_0, \Delta_1, \ldots, \Delta_{N-1}) F^\dag$ with 
\[ \Delta_k = e^{-\alpha^2} \sum_{\ell=0}^{N-1} \omega^{k\ell} e^{\alpha^2 \omega^\ell}.\]
One obtains an orthonormal basis using $|\ell\rangle := \Gamma^{-1/2} |\omega^\ell \alpha\rangle$:
\[ |\ell\rangle = \frac{1}{N} \sum_{k,p=0}^{N-1} \omega^{p(\ell-k)} \Delta_p^{-1/2}  |\omega^k \alpha\rangle.\]
For the two-mode cat encoding associated to the noncommutative group $G = \langle X,Z\rangle$, one can encode a qubit in a 2-dimensional representation of the group. Here, the group is abelian and therefore all the irreducible representations are 1-dimensional. The natural strategy is therefore to encode the basis elements in distant irreducible representations through $\mathrm{Enc}(|k\rangle) := F^\dag |\lambda_{kM}\rangle$ for $0 \leq k <d$. 
This corresponds to the standard cat code:
\begin{align}
\mathrm{Enc}(|k\rangle) = \frac{1}{\sqrt{N \Delta_{kM}}} \sum_{p=0}^{dM-1}  \omega^{-kpM}  |\omega^p \alpha\rangle.
\end{align}

\section{Fidelity of entanglement}
\label{sec:fid}

Consider a pure-loss channel of parameter $t$. It can be represented as a unitary operator $U$ acting on the input mode and on an additional mode in the vacuum state, that maps an initial coherent state $|\beta\rangle$ to
$$ U |\beta\rangle |0\rangle = |t\beta \rangle_t |r\beta\rangle_r$$
where $r = \sqrt{1-t^2}$, and where we label by $t$ and $r$ the two output modes. We also use $\gamma := 1-t^2$ so that $t=\sqrt{1-\gamma}$ and $r=\sqrt{\gamma}$.
One can choose the following Kraus operators for this channel:
$$C_p = \langle p|_r U \Pi \otimes |0\rangle$$
where we define the orthonormal basis $|p\rangle_r$ for the second output mode as $|p\rangle_r = \sum_{q\in G} [\Gamma_r^{-1/2}]_{q,p} |rq\a\rangle$ with the relevant Gram matrix $[\Gamma_{r}]_{g,h} = \langle rg\a|rh\a\rangle$. 
These are indeed Kraus operators since they satisfy the usual normalization condition
\begin{align*}
\sum_{p\in G} C_p^\dag C_p&= \sum_{p\in G} (\Pi \otimes \langle 0 | ) U^\dag (\mathbbm{1}_t \otimes |p\rangle \langle p|_r) U (\Pi \otimes |0\rangle\\
&=  (\Pi \otimes \langle 0|) U^\dag (\mathbbm{1}_t \otimes \1_r) U (\Pi \otimes |0\rangle\\
&=  (\Pi \otimes \langle 0| )  (\Pi \otimes |0\rangle\\
&= \Pi 
\end{align*}
and they act as desired on coherent states:
\begin{align*}
\sum_{p\in G} C_p |g\a\rangle \langle h\a| C_p^\dag&=\sum_{p\in G} \langle p|_r U (\Pi \otimes |0\rangle ) |g\a\rangle \langle h\a| (\Pi \otimes \langle 0 | ) U^\dag  |p\rangle_r\\
&=\sum_{p\in G} \langle p|_r U  |g\a\rangle |0\rangle   \langle h\a|\langle 0|  U^\dag |p\rangle_r\\
&=\sum_{p\in G} \langle p|_r   |tg\a\rangle |rg\a\rangle   \langle th\a|\langle rh\a|   p\rangle_r\\
&=|tg\a\rangle   \langle th\a| \sum_{p\in G}  \langle rh\a|   p\rangle \langle p|_r   |rg\a\rangle \\
&=|tg\a\rangle   \langle th\a| \langle rh\a    |rg\a\rangle 
\end{align*}
In order to compute the fidelity of entanglement for this channel for the Petz recovery map, one should first evaluate the QEC matrix $M$ with entries $M_{[k,p],[\ell, q]} := \langle \widehat{k,0} |C_p^\dag C_q |\widehat{\ell,0} \rangle$. The fidelity $F_\mathrm{ent}$ can then be conveniently computed as~\cite{ZHL24}
\begin{align}\label{eqn:fid-petz}
F_\mathrm{ent} & = \frac{1}{d^2}  \left\| \mathrm{tr}_{L} M^{1/2} \right\|_{\mathrm{hs}}^2,
\end{align}
where $\| \cdot\|_{\mathrm{hs}}$ stands for the Hilbert-Schmidt norm.
Computing $M$ is straightforward:
\begin{align*}
C_p |\widehat{k,0} \rangle &= C_p \sum_{g\in G} [\Gamma^{-1/2} F^\dag]_{g,k0} |g\a\rangle\\
&=  \sum_{g\in G} [\Gamma^{-1/2} F^\dag]_{g,k0} |tg\a\rangle \langle p|rg\a\rangle\\
&=  \sum_{g\in G} [\Gamma^{-1/2} F^\dag]_{g,k0}  [\Gamma_r^{1/2}]_{p,g}  |tg\a\rangle
\end{align*}
and therefore
\begin{align*}
M_{[k,p],[\ell, q]} &= \langle \widehat{k,0} |C_p^\dag C_q |\widehat{\ell,0} \rangle\\
&= \sum_{g,h \in G}[F \Gamma^{-1/2} ]_{k0, g} [\Gamma^{-1/2} F^\dag]_{h,\ell 0}  [\Gamma_r^{1/2}]_{gp}   [\Gamma_r^{1/2}]_{qh} [\Gamma_t]_{gh}.
\end{align*}
The entries of the three Gram matrices $\Gamma, \Gamma_t, \Gamma_r$ are simple functions of the $|G|$-dimensional Gram matrix of the 2-dimensional family of vectors $\lambda(g)|+_i\rangle \in \mathbbm{C}^2$, with $|+_i\rangle = \frac{1}{\sqrt{2}} (|0\rangle + i|1\rangle)$:
\begin{align}\label{eqn:3Gram}
[\Gamma]_{gh} = e^{2 \alpha^2(\Lambda_{gh}-1) },\qquad   [\Gamma_t]_{gh} = e^{2 (1-\gamma) \alpha^2(\Lambda_{gh}-1) } ,\qquad  [\Gamma_r]_{gh} = e^{2 \gamma \alpha^2(\Lambda_{gh}-1) }.
\end{align}

\end{document}